\newcommand{\curve}[1]{\ensuremath{\mathbf{#1}}}
\newcommand{\region}[2]{\ensuremath{S_{\mathbf{#1}\mathbf{#2}}}}
\newcommand{\calR}{\ensuremath{\mathcal{R}}}
\renewcommand{\L}{\lefthalfcup}
\newcommand{\lefthalfcup}{\rule{0.4pt}{1.2ex}\rule{1.2ex}{0.4pt}\xspace}
\newcommand{\lefthalfcap}{\rule{0.4pt}{1.2ex}\rule[1.2ex]{1.2ex}{0.4pt}\xspace}
\newcommand{\righthalfcup}{\rule{1.2ex}{0.4pt}\rule{0.4pt}{1.2ex}\xspace}
\newcommand{\righthalfcap}{\rule[1.2ex]{1.2ex}{0.4pt}\rule{0.4pt}{1.2ex}\xspace}
\title{Order-preserving 1-string representations of planar graphs}
\titlerunning{Order-preserving 1-string representations} 
\author{Therese Biedl\thanks{Research supported by NSERC.} \and Martin Derka\thanks{The author was supported by the Vanier CGS.}}
\institute{David R. Cheriton School of Computer Science, University of Waterloo
\email{\{biedl,mderka\}@uwaterloo.ca}}
\begin{document}

\maketitle

\begin{abstract}
This paper considers 1-string representations of planar graphs that are 
{\em order-preserving} in the sense that the order of crossings along the
curve representing vertex $v$ is the same as the order of edges in the
clockwise order around $v$ in the planar embedding.  We show that this does
not exist for all planar graphs (not even for all planar 3-trees), but show
existence for some subclasses of planar partial 3-trees.  In particular, for
outer-planar graphs it can be order-preserving and {\em outer-string}
in the sense that all ends of strings are on the outside of the representation.
\end{abstract}

\section{Introduction}
\label{sec:introduction}

String representations recently received a lot of attention, especially
for planar graphs.  Scheinerman \cite{cit:scheinerman} had asked in 1984 whether every 
planar graph can be represented as the intersection graph of segments in the plane.  
This was settled partially by Chalopin, Gon\c{c}alves and Ochem \cite{cit:chalopin-string}, who showed
that every planar graph has a 1-string representation, i.e., a representation as an intersection
graph of strings such that any two strings may cross at most once.  Extending their result,
in 2009 Chalopin and Gon\c{c}alves
finally settled Scheinerman's conjecture in the positive~\cite{cit:chalopin-seg}. 
We later showed that 1-string representations of planar graphs
can be achieved even with orthogonal curves
with at most 2 bends \cite{cit:jocg}.  A number of other papers gave 
string representations for subclasses of planar graphs that are simpler to build
and/or have other useful properties, see for example
\cite{FMP91,KobourovUeckertVerbeek,cit:chaplick,cit:mfcs,cit:cccg}.  
Testing whether a graph has a string representation is NP-hard 
\cite{cit:kratochvil-II,cit:middendorf} and in NP \cite{cit:schaefer}; the latter is not obvious because string representations may require
exponentially many bends for non-planar graphs \cite{KratochvilM1994}.

\medskip\noindent{\bf Our results: }
In this paper, we study the following question: Does every planar graph
have a 1-string representation where the order of crossings along curves 
{\em preserves} the planar embedding in the sense that the order of crossings 
along the curve of $v$ corresponds to the cyclic order of edges around $v$
in some planar embedding?  This is motivated by that we found 
string representations quite hard to read; during our work on \cite{cit:jocg}
we struggled to verify correctness in some cases because the crossing of
curves for an edge occurred at unexpected places.  Furthermore, having an
order-preserving string representation could make it easier to create such
representations  by using the typical incremental approach that adds one 
vertex on the outer-face at a time; for this
it would be especially helpful if such representations were also 
{\em outer-string} in the sense that ends of strings are on the
infinite region defined by the representation.
We show the following:
\begin{itemize}
\item Not all planar graphs have order-preserving 1-string representations.  In fact,
	we can construct a planar 3-tree that has no such representation.
\item For some subclasses of planar partial 3-trees, 
	we construct order-pre\-ser\-ving 1-string
	representations.  For outer-planar graphs, these are 
	additionally outer-string (and use segments),
	while for the other graph classes we show that 
	order-preserving outer-1-string
	representations do not always exist.  
\end{itemize}


We are not aware of any previous results on order-preserving 1-string 
repre\-sen\-tations. (On the other hand, string-representations of planar
graphs obtained from contact representations are usually order-preserving,
but strings then intersect twice, at least for some edges.)
\todo{TB: Changed here to avoid some of the criticisms.}
The closest related results are on the {\em abstract graph realizability problem}
\cite{cit:kratochvil-II,cit:middendorf}, which asks to draw a graph such that only a
given set of edge-pairs are allowed to cross.  

\section{Definitions}

A {\em string representation} $\calR$ assigns a curve $\curve{v}$ in the plane to every vertex
$v$ in a graph in such a way that $(v,w)$ is an edge if and only if $\curve{v}$
intersects $\curve{w}$.  (Throughout the paper, bold-face $\curve{x}$ always
denotes the curve assigned to vertex $x$.)
We demand that $\curve{u}$ and $\curve{v}$ intersect 
only if there is a proper crossing,
i.e., any sufficiently small circle centered at an intersection-point 
crosses $\curve{u}$, $\curve{v}$, $\curve{u}$, $\curve{v}$ in that order.
(In particular no curve $\curve{u}$ should end on another curve $\curve{v}$, 
though such a touching-point could always be resolved into a proper crossing by
extending $\curve{u}$ a bit.)  We also do not allow three curves to share a
point.
A {\em 1-string representation} is a string representation such that
any two curves cross at most once. A {\em segment representation} uses straight-line
segments in place of strings.  A {\em $B_k$-VPG-representation} uses orthogonal curves with at most $k$ bends as strings.

A string representation $\calR$ divides the plane into connected regions.  The {\em contour}
is the infinite region of $\Bbb{R}^2-\calR$.  A string representation is called {\em weakly outer-string} if
all vertex curves are incident to the contour.  It is called {\em 
outer-string} if all vertex curves have an end incident to the contour.%
\footnote{One could distinguish this further by whether both ends must be
on the contour or whether one end suffices.
All our outer-string constructions have both ends on the contour, while
all our impossibility-results hold even if only one end is required to be
on the contour, so the distinction does not matter for the results in
our paper.}
A weakly outer-string 
representation can be made outer-string by ``doubling back'' along the curve
of each vertex, but this does not work for an outer-1-string representation,
because doubling back along the curve would make some curves cross twice.  
See \cite{Cabello2016,cit:Keil2016} and the references therein for more on outer-string representations.

In this paper, we only consider connected graphs. A graph is called {\em planar} if it can be drawn in the plane without crossing.
Such a planar drawing $\Gamma$ defines, by enumerating edges around vertices in clockwise order,
a {\em rotation scheme}, i.e., an assignment of a cyclic order of edges  at each
vertex.   From the rotation scheme, one can read the {\em faces}, i.e., the 
vertices and edges that are incident to each connected piece of $\Bbb{R}^2-\Gamma$.
A {\em plane graph} is a planar graph with a fixed rotation scheme. 
An {\em outer-planar graph} is a planar graph that has a rotation scheme such
that all vertices are incident to one face.  
An {\em outer-plane graph} is a plane graph with the rotation system that describes such an embedding.
A {\em $k$-tree} (used here only for $k=2,3$) is a graph that has a vertex order
$v_1,\dots,v_n$ such that $v_1,\dots,v_k$ is a clique, and each $v_i$ for
$i>k$ has exactly $k$ neighbours in $v_1,\dots,v_{i-1}$, and they form a clique.
A {\em partial $k$-tree} is a subgraph of a $k$-tree.  
Every outer-planar graph is a partial 2-tree.

Fix a rotation scheme of a graph. 
We say that a 1-string
representation is {\em order-preserving} with respect to the rotation
scheme if for any vertex $v$, we can walk along curve $\curve{v}$ from
one end to the other and encounter the crossings  with $\curve{w_1},\dots,
\curve{w_k}$ in the same order in which the neighbours $w_1,\dots,w_k$ of
$v$ appear
in the cyclic order of edges around $v$.  This leaves open the choice
which neighbour of $v$ should be $w_1$, since the order at $v$ is cyclic  while
the order along $\curve{v}$ is not.%
\footnote{Once we fix how to break up the cyclic order at all vertices, an 
order-preserving 1-string representation
can be described abstractly as a graph $H$ and can be realized if and only if
$H$ is planar.  Hence the problem is interesting only if we keep this choice.}

\section{Graphs with no order-preserving representations}
\label{sec:does-not-exits}

In this section, we show that there exist planar graphs that have no
1-string representation that preserves the order of any planar embedding.  
To define them, we need the following graph operation:  Given a plane
graph $G$, the {\em stellation} of $G$
is obtained by inserting a new vertex into every face of $G$, and
making it adjacent to all vertices incident to that face.  The 
{\em triple-stellation} of $G$ is obtained by stellating $G$ to 
get $G'$, stellating $G'$ to get $G''$, and finally stellating $G''$. 

\begin{lemma}
\label{lem:stellation}
Let $G$ be a plane graph with minimum degree 3 and 
at least $|V(G)|+1$ faces that are triangles.  Then the 
triple-stellation $G'''$ of $G$ has no order-preserving 1-string representation with respect to this rotation scheme.
\end{lemma}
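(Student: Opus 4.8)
The plan is to assume that an order-preserving 1-string representation $\calR$ of $G'''$ exists and derive a contradiction from a local \emph{trapping} phenomenon caused by stellation, amplified by a global count over the many triangular faces of $G$.

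First I would prove the key local lemma. Suppose $xyz$ is a triangular face (of $G$, or of an intermediate stellation) that is stellated by a new vertex $w$ adjacent to $x,y,z$. Because each pair among $\curve{x},\curve{y},\curve{z}$ crosses exactly once, these three curves bound a triangular region $T_{xyz}$ whose three sides are the sub-arcs of $\curve{x},\curve{y},\curve{z}$ between consecutive crossings. Since $xyz$ is a triangular face, the edges $xy$ and $xz$ are consecutive in the rotation around $x$ with that face between them, and stellation inserts $xw$ exactly between them; order-preservation then forces the crossing of $\curve{w}$ with $\curve{x}$ to lie between the crossings with $\curve{y}$ and $\curve{z}$ along $\curve{x}$, i.e.\ on the side of $T_{xyz}$ carried by $\curve{x}$. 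The same holds at $y$ and $z$. Hence $\curve{w}$ meets $\partial T_{xyz}$ transversally in exactly three points (at most one per side, by the 1-string condition), and since three is odd, the two ends of $\curve{w}$ lie on opposite sides of the Jordan curve $\partial T_{xyz}$: exactly one end of $\curve{w}$ is trapped strictly inside $T_{xyz}$.

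Next I would iterate this through the three rounds of stellation and track how $\curve{w}$ subdivides $T_{xyz}$. Writing the three crossings in the cyclic order forced by order-preservation at $w$, the portion of $\curve{w}$ inside $T_{xyz}$ cuts off one of the three child triangles $T_{xyw},T_{yzw},T_{zxw}$ so that it lies inside $T_{xyz}$, while the placement of the remaining children and of the trapped end is governed by the same in/out parity. Applying the trapping lemma again to the children, and once more at the third level, produces inside each region $T_f$ of an original triangular face $f$ of $G$ a controlled, nested family of curve-ends forced to be interior to $T_f$. Because the faces of $G$ are interior-disjoint and order-preservation keeps their regions $T_f$ interior-disjoint (two faces sharing an edge $uv$ occupy opposite sides of the crossing $\curve{u}\cap\curve{v}$), these trapped configurations are disjoint across distinct faces. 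Finally I would set up the counting: the minimum-degree-$3$ hypothesis guarantees that every original curve genuinely participates in the triangles around it, and the triple stellation is calibrated so that each region $T_f$ must consume an end of an original curve (a vertex of $G$) that cannot be shared with another face. Charging each of the at least $|V(G)|+1$ triangular faces to such a vertex-resource, of which there are only $|V(G)|$, two faces are forced to charge the same resource; since the corresponding regions $T_f$ are disjoint, that curve would need two trapped ends in two disjoint places while still realizing its crossings elsewhere, which is impossible.

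I expect the main obstacle to be the nesting analysis in the third step: the three child triangles of a stellated face are \emph{not} symmetrically placed --- the parity of $\curve{w}$ sends some children inside $T_{xyz}$ and makes others bulge out, and the third child region is even crossed by a fourth curve --- so controlling, through all three levels, exactly which trapped end survives inside $T_f$ and proving the charging is injective is the delicate heart of the argument. By contrast, the trapping lemma and the interior-disjointness of the regions $T_f$ are robust and follow cleanly from order-preservation together with the single-crossing condition.
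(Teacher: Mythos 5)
Your local observation --- that a stellation curve meets the boundary of the triangular region an odd number of times and therefore has one end trapped inside --- is indeed the engine of the paper's proof, and your instinct that the hypothesis ``$|V(G)|+1$ triangular faces versus $|V(G)|$ vertices'' feeds a pigeonhole argument is also right. But there are two genuine gaps. First, your trapping lemma is false as stated: order-preservation only fixes a \emph{cyclic} order, linearized by breaking it at one point per vertex. If that break falls inside the face angle of $xyz$ at $x$ (the paper calls this face-vertex-incidence \emph{broken}), then the crossing of $\curve{w}$ with $\curve{x}$ is \emph{not} forced to lie between the crossings with $\curve{y}$ and $\curve{z}$; it can sit at an end of $\curve{x}$. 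This is precisely where both hypotheses of the lemma are consumed: each vertex has at most one broken incidence, so among at least $|V(G)|+1$ triangular faces some face $T$ has all three incidences unbroken, and minimum degree $3$ supplies an extra neighbour $u'$ whose crossing pins the stellation crossing into the stretch $\curve{u}[v,w]$. Your proposal never introduces the broken/unbroken distinction, so the pigeonhole is applied to the wrong resource and the trapping lemma cannot be invoked at an arbitrary face.

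Second, your global structure is inverted relative to what actually closes the argument, and the part you yourself flag as ``the delicate heart'' (injectivity of the charging, disjointness of the regions $T_f$ across faces, and why a single original curve cannot donate ends to two faces) is exactly the part that does not go through: a curve has two ends, so two faces charging the same vertex is not an immediate contradiction, and interior-disjointness of the regions of two faces sharing an edge does not follow just from the two faces lying on opposite sides of one crossing point. The paper needs none of this: after the pigeonhole isolates \emph{one} fully unbroken triangular face $T=\{u,v,w\}$, the contradiction is entirely local and uses all three stellation levels inside $T$. The first stellation vertex $x$ has one end trapped in the region $C$ and its interior stretch splits $C$ into $C'$ and $C^r$; the second vertex $y$ has one end trapped in $C'$ and its stretch splits $C^r$ into $C''$ and $C'''$; the third vertex $z$ then has one end trapped in $C'$ and the other trapped in $C''$, so both ends avoid $C'''$ --- yet order-preservation forces $\curve{z}$ to cross $\partial C'''$ three times, an odd number. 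Reworking your write-up around the broken-incidence pigeonhole and this single-face three-level nesting would turn it into the paper's proof; as it stands, the key lemma fails at broken incidences and the global charging is not established.
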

\begin{proof}
Assume for contradiction we had such a 1-string representation $\calR$,
and let $\calR_G$ be the induced 1-string representation of $G$, which
is also order-preserving.  The following notation will be helpful:
If $a,c$ are neighbours of $b$,
then let $\curve{b}[a,c]$ be the stretch
of $\curve{b}$ between the intersection with $\curve{a}$ and 
$\curve{c}$. 

Consider a face-vertex-incidence in $G$, which can be described
by giving a vertex $b$ and two neighbours $a,c$ of $b$ that
are consecutive in the clockwise order at $b$.  
We call such a face-vertex-incidence {\em unbroken} if 
(in $\calR_G$)
$\curve{b}[a,c]$ contains no other crossing, else
we call it {\em broken}.  Since $\calR_G$ is order-preserving,
for every vertex $b$ in $G$ only one face-vertex-incidence at $b$
is broken.
Since $G$ has at least $|V(G)|+1$
triangular faces, there exists a face $T=\{u,v,w\}$ of $G$ such that 
all face-vertex-incidences at $T$ are unbroken.  We will find a contradiction
at the stellation vertices that were placed in $T$.  
See also Fig.~\ref{fig:stellation}.

Let $x$ be the vertex that (during the stellation of $G$ to get $G'$) was placed
in face $T$.  We claim that $\curve{x}$ must intersect $\curve{u}$ in
$\curve{u}[v,w]$.  To see this, recall that $\deg_G(u)\geq 3$, hence $u$ has at 
least one other neighbour $u'$ in $G$.  Since the face-incidence at $u$ is
unbroken, $\curve{u}[v,w]$  contains no other crossing of $\calR_G$, so $\curve{u'}$
intersects $\curve{u}$ outside this stretch.  Since $T$ is a face in $G$,
the (clockwise or counter-clockwise)
order of neighbours at $u$ in $G'$ contains $u',v,x,w$.  To maintain this order
in the string representation, the intersection between
$\curve{x}$ and $\curve{u}$ (in $\calR$) must be on $\curve{u}[v,w]$.   
Similarly one argues that
$\curve{x}$ intersects $\curve{v}[u,w]$ and $\curve{w}[u,v]$.

Let $C$ be the region bounded by
$\curve{u}[v,w]\cup \curve{w}[u,v]\cup \curve{v}[w,u]$.
Curve $\curve{x}$ intersects $\delta C$ three times, and no more since
curves intersect at most once in a 1-string representation. So 
$\curve{x}$ starts (say)
inside $C$,
crosses $\delta C$ to go outside, crosses $\delta C$ to go inside, 
and then crosses $\delta C$ again to end outside.   Between the second
and third crossing, $\curve{x}$ contains
a stretch that is inside $C$; after possible
renaming of $\{u,v,w\}$ we assume that this is $\curve{x}[v,w]$.  This stretch
splits 
$C$ into two parts, say $C'$ (incident to parts of $\curve{u}$)
and $C^r$ (incident to the crossing of $\curve{v}$ and $\curve{w}$).
\todo{Unimportant change-request:  
Make $\curve{w}$ and $\curve{v}$ cross ``the other way''.
Extend $\curve{y}$ to cross $\curve{w}$ from outside.}

\begin{figure}[ht]
\hspace*{\fill}
\includegraphics[width=0.2\linewidth]{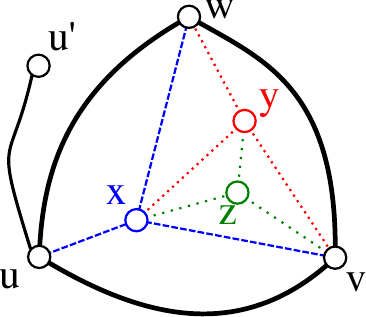}
\hspace*{\fill}
\includegraphics[width=0.35\linewidth]{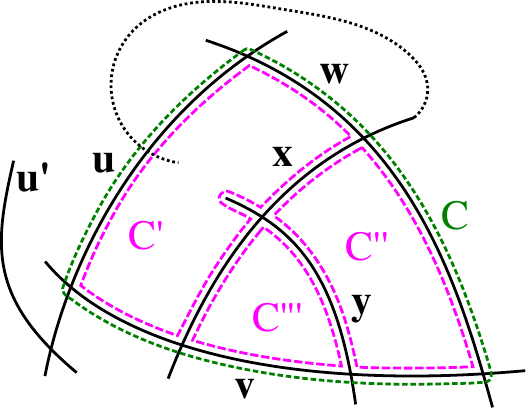}
\hspace*{\fill}
\caption{For the proof of Lemma~\ref{lem:stellation}.}
\label{fig:stellation}
\end{figure}

Let $y$ be the vertex that (during the stellation of $G'$ to get $G''$) was
placed in the face $\{v,w,x\}$ of $G'$.  Since $v,w,x$ all have degree
3 or more in $G'$, as before one argues that $\curve{y}$ must intersect
$\curve{x}[v,w]$, $\curve{w}[x,v]$ and $\curve{v}[w,x]$.
Curve $\curve{y}$ intersects $\delta C'$ (in $\curve{x}[v,w]$), but cannot 
intersect $\delta C'$ a second time, else it would cross $\curve{u}$ 
(but $(u,y)\not\in E$) or would
cross one of $\curve{x},\curve{v},\curve{w}$ twice (which is not allowed).
Hence $\curve{y}$ starts inside $C'$, then crosses $\curve{x}$, and then
crosses one of $\curve{v}$ and $\curve{w}$.  Up to renaming of $\{v,w\}$
we may assume that $\curve{y}$ crosses $\curve{v}$ first.
Hence $\curve{y}[x,v]$ splits $C^r$ into two parts, say $C''$ 
(incident to parts of $\curve{w}$)
and $C'''$ 
(incident to the crossing of $\curve{v}$ and $\curve{x}$).

Now finally consider the vertex $z$ that was placed in $\{x,y,v\}$
when stellating $G''$ to obtain $G'''$.  As before one argues that
$\curve{z}$ has an end inside $C'$, because it crosses $\curve{x}$
in stretch $\curve{x}[v,y]\subset \curve{x}[v,w]$, and it cannot cross
$C'$ again.  But we can also see that
$\curve{z}$ has an end inside $C''$, since it crosses $\curve{y}[x,v]$
and crosses no other curve on the boundary of $C''$.
But this means that $\curve{z}$ has
both ends outside $C'''$, contradicting that it must intersect the
boundary of $C'''$ three times to respect the edge-orders at $x,y,v$.
Contradiction, so $G'''$ does not have an order-preserving 1-string
representation. 
\qed
\end{proof}


\begin{theorem} 
\label{thm:notOrder}
There exists a planar 3-tree that has no order-preserving 
1-string representation.
\end{theorem}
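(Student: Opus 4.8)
The plan is to invoke Lemma~\ref{lem:stellation}: it suffices to exhibit a plane graph $G$ that satisfies its hypotheses and whose triple-stellation $G'''$ happens to be a planar 3-tree. The key observation is that stellation is essentially the operation that builds 3-trees: inserting a vertex into a triangular face and joining it to the three face-vertices adds a vertex of degree $3$ whose neighbourhood is a triangle. I would therefore take $G$ to be a planar 3-tree, which on at least four vertices is automatically a triangulation, so that every face is a triangle and stellation can be iterated while staying within the class of planar 3-trees.

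Concretely, I would let $G$ be the planar 3-tree on five vertices obtained from $K_4$ by stacking one vertex into a triangular face (equivalently $K_5$ minus an edge). By Euler's formula a triangulation on $n$ vertices has $2n-4$ triangular faces, so $G$ has $6 = |V(G)|+1$ triangular faces and clearly minimum degree $3$; hence $G$ meets the hypotheses of Lemma~\ref{lem:stellation}. To see that the triple-stellation stays a 3-tree I would prove that the stellation of any planar 3-tree $H$ is again a planar 3-tree: take a construction order $v_1,\dots,v_n$ witnessing that $H$ is a 3-tree and append the stellation vertices in any order. Each original $v_i$ retains exactly its three earlier clique-neighbours, while each stellation vertex is adjacent to precisely the three mutually adjacent vertices of the face it was placed in, all of which precede it, and to no other stellation vertex. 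Iterating three times shows that $G'''$ is a planar 3-tree, and Lemma~\ref{lem:stellation} then denies it an order-preserving 1-string representation for the rotation scheme coming from this embedding.

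The one genuine subtlety — and the step I expect to need the most care — is that Lemma~\ref{lem:stellation} excludes a representation only for one fixed rotation scheme, whereas the theorem asks for nonexistence over every planar embedding. Here I would use that $G'''$, being a triangulation on at least four vertices, is 3-connected, so by Whitney's theorem its combinatorial embedding is unique up to reflection (and the choice of outer face does not change the rotation scheme). Reflecting an order-preserving representation yields a representation for the reflected rotation scheme: reflection reverses the clockwise order of edges around every vertex, and the matching linear order of crossings along each curve is recovered simply by traversing that curve from the opposite end. Consequently a representation for either of the two possible schemes would produce one for the scheme excluded by the lemma, and hence no order-preserving 1-string representation of $G'''$ exists for any planar embedding.
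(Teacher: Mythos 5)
Your proof is correct and follows essentially the same route as the paper: apply Lemma~\ref{lem:stellation} to a planar 3-tree and observe that stellation preserves the class of planar 3-trees, so the triple-stellation is again a planar 3-tree with no order-preserving representation. You additionally make explicit two points the paper leaves implicit --- that the rotation scheme of a planar 3-tree is unique up to reflection (and that reflection is harmless for order-preservation), and that the face-count hypothesis already holds at $n=5$ since $2n-4 = n+1$ there, whereas the paper simply takes an arbitrary 3-tree with $n\geq 6$ --- but these are refinements of, not departures from, the same argument.
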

\begin{proof}
Start with an arbitrary planar 3-tree $G$ with $n\geq 6$ vertices;
this has minimum degree 3 and $2n-4\geq n+2$ triangular faces   in its
(unique) rotation scheme.  Stellating a 3-tree gives again a 3-tree, so
by Lemma~\ref{lem:stellation} the triple-stellation of $G$ is a 3-tree that
has no order-preserving 1-string representation. \qed
\end{proof}

\section{Order-preserving outer-1-string representations}

Now we turn towards positive results and show that
every outer-plane graph has an order-preserving outer-1-string
representation.  We first discuss one existing result that does not quite
achieve this.  It is easy to show that every outer-planar graph can be 
represented as touching-graph of line segments
(see e.g.~\cite{KobourovUeckertVerbeek} for much broader results).
The standard way to do this (see also Fig.~\ref{fig:outplEx}) results,
after extending the segments a bit, in
a segment-representation that is order-preserving and weakly outer-string.  
However, this does not
quite achieve our goal, because the ends of segments are not necessarily
on the outer-face.

\begin{figure}[ht]
\hspace*{\fill}
\includegraphics[page=1,width=0.3\linewidth]{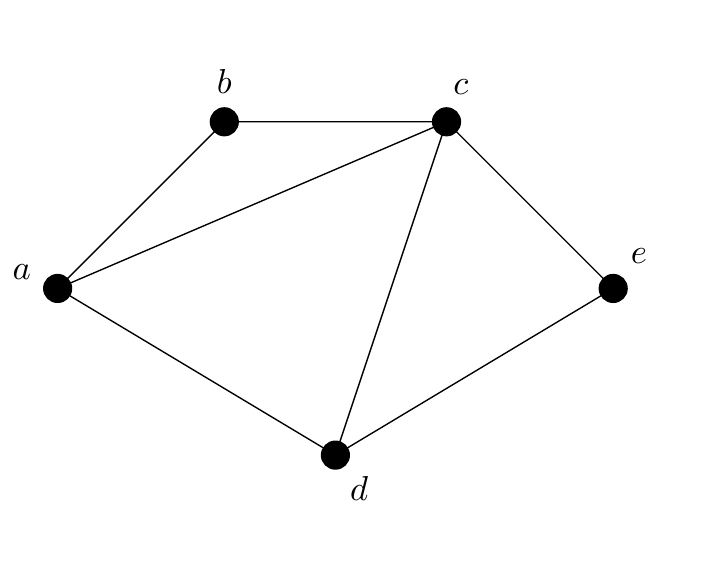}
\hspace*{\fill}
\includegraphics[page=2,width=0.3\linewidth]{figures_outplEx.pdf}
\hspace*{\fill}
\includegraphics[page=3,width=0.3\linewidth]{figures_outplEx.pdf}
\hspace*{\fill}
\caption{An outer-planar graph, a weakly segment-representation 
that is not outer-segment at $\curve{e}$, 
and a representation as circle graph that is not order-preserving at $\curve{c}$.}
\label{fig:outplEx}
\end{figure}

We instead give two other constructions.  The first one uses
that any outer-planar graph is a {\em circle graph}, i.e., the intersection 
graph of chords of a circle \cite{cit:wessel}.  This obviously
gives an end-outer-segment representation, but it need not be
order-preserving (see Fig.~\ref{fig:outplEx}).
Our first construction hence re-proves
this result and maintains invariants to ensure that the representation is
indeed order-preserving.

The resolution in this representation could be very bad, and we therefore
give a second construction where the curves are orthogonal instead.
We use one bend for each vertex curve here, 
and so obtain a $B_1$-VPG-representation.  Since there are $n$ vertices
and at most $n$ bends, the representation
\todo{Minor rewording, and an "at least" changed to "at most".}
can be embedded into a grid of size $O(n) \times O(n)$.  

In our proofs, we use that any 2-connected
outer-planar graph $G$ can be built up as
follows \cite[Lemma 3]{cit:govindran}:  Fix 
an edge $(u,v)$.  Now repeatedly add an \emph{ear}, i.e., a path 
$P = u_0,u_1, \ldots, u_k,u_{k+1}$ with $k\geq 1$ where $(u_0,u_{k+1})$
is an edge on the outer-face of the current graph $G'$, and  
$u_1,\dots,u_k$ are new
vertices that induce a path and have no edges to $G'$ other than $(u_0,u_1)$
and $(u_k,u_{k+1})$.

A crucial requirement of the constructed representation $\calR$ of such
a subgraph is the following {\em order-condition}:
If $w$ and $w'$ are the counterclockwise and clockwise neighbours of $v$ on the outer-face,  then we encounter the neighbours of $v$ in order,
starting with $w$ and ending with $w'$, while walking along
$\curve{v}$.    Put differently, the broken face-vertex-incidence is
the one with the outer-face.  We consider
$\curve{v}$ to be directed so that it intersects first $\curve{w}$ and 
last $\curve{w'}$.

The second crucial ingredient for both proofs is
to reserve for edges (somewhat similar as was done
for faces in \cite{cit:jocg,cit:cccg,cit:chalopin-string,cit:mfcs})  a
region that can be used to attach subgraphs.
%
Thus define a \emph{private region} $\region{u}{v}$ of edge $(u,v)$ to be
a region that 
contains an end of $\curve{u}$ and an end of $\curve{v}$ and does not
intersect any other curve or private regions of $\calR$.
%
Both constructions maintain such a private region $\region{u}{v}$
for every outer-face edge $(u,v)$.  Moreover, if $v$ is the clockwise 
neighbour of $u$, then $S_{\curve{u}\curve{v}}$ 
contains the tail of $\curve{u}$ and the head of $\curve{v}$.

\subsection{Circle-chord representation}

We now re-prove that outer-planar graphs are circle graphs, and show
that furthermore the order can be preserved.

\begin{theorem}
\label{thm:circleGraph}
Every outer-plane graph has an order-preserving representation as intersection graph of chords of a circle $C$.  
\end{theorem}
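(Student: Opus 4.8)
The plan is to build the chord representation incrementally, following the ear decomposition of $2$-connected outer-planar graphs recalled above, and to carry two invariants through the whole construction: the \emph{order-condition} (so that at every vertex the unique broken face-vertex-incidence is the one with the outer face) and the existence of a \emph{private region} \region{u}{v} for every outer-face edge $(u,v)$. I would represent each vertex $v$ by a chord \curve{v} of the circle $C$, keeping every ``free'' end of each chord---the stretch near its endpoint on $C$ that precedes the first crossing or follows the last---pointed into the outer region, so that precisely these free ends are the material stored inside the private regions. The base case is a single edge $(u,v)$: two interleaving chords crossing once, whose two cut-off arcs of $C$ serve as the two copies of the private region of $(u,v)$; both invariants hold trivially, and the order-condition is vacuous.

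For the inductive step I would take an ear $P=u_0,u_1,\dots,u_k,u_{k+1}$ added on the outer-face edge $(u_0,u_{k+1})$ and insert the new chords \curve{u_1},\dots,\curve{u_k} entirely inside the private region \region{u_0}{u_{k+1}}. Since by definition this region meets no curve other than the free ends of \curve{u_0} and \curve{u_{k+1}}, any chord drawn inside it can cross only those two, so I would place endpoints of the new chords on the arc of $C$ bounding the region so as to realize exactly the path $u_1,\dots,u_k$ together with the two attachment crossings $\curve{u_1}\cap\curve{u_0}$ and $\curve{u_k}\cap\curve{u_{k+1}}$, and no spurious crossings. Because the free end of \curve{u_0} that lies in the region sits at the outer-face break in the rotation at $u_0$, crossing it by \curve{u_1} inserts $u_1$ into the rotation at exactly the position vacated by the old outer-face edge, preserving the order-condition at $u_0$; the same holds symmetrically at $u_{k+1}$, and at each new internal vertex $u_i$ the condition holds by construction since nothing is yet attached on its inner side. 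Finally I would subdivide the old region into new private regions \region{u_0}{u_1},\dots,\region{u_k}{u_{k+1}} for the new outer-face edges, restoring the second invariant.

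To obtain the full statement for an arbitrary connected outer-plane graph, I would then induct on its block-cut tree, growing each $2$-connected block by the procedure above and attaching pendant edges and further blocks at a cut vertex $c$ inside a private region incident to $c$; since $\curve{c}$ is a single chord and distinct attachments occupy disjoint free ends along it, the crossings coming from different blocks appear as consecutive runs along \curve{c} in the correct rotational order.

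The hard part will be the inductive step, namely realizing the inserted path by genuine straight \emph{chords} confined to the bounded private region while simultaneously achieving three things at once: the correct adjacencies (the ear and nothing more), the correct \emph{order} of crossings along each chord, and straight-line realizability inside the region. Reconciling these---in particular choosing the circular positions of the new endpoints so that interleaving encodes the path, the two end-chords attach to \curve{u_0} and \curve{u_{k+1}} without leaving the region and creating extra crossings, and the order-condition is re-established exactly at the outer-face break---is where the bookkeeping of tails, heads, and region boundaries must be carried out carefully, and I expect this to be the main obstacle.
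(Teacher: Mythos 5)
Your core argument is the same as the paper's: an ear-by-ear construction inside a circle, maintaining exactly the two invariants the paper uses (the order-condition and a private region $\region{u}{v}$ for each outer-face edge, containing the tail of $\curve{u}$ and the head of $\curve{v}$), with the same base case. The placement step you defer as ``the main obstacle'' is resolved in the paper quite concretely and is not hard: the private region is bounded by an arc of $C$ and a chord and is convex, so any chord with both endpoints on that arc stays inside the region and can only cross $\curve{u}$, $\curve{v}$, or other new chords. Writing $C[u,v]$ for the sub-arc between the tail of $\curve{u}$ and the head of $\curve{v}$, one picks the two extreme endpoints $u',v'$ just \emph{outside} $C[u,v]$ (but still inside the region) so that the end chords interleave with $\curve{u}$ and $\curve{v}$ and hence cross them, and places the remaining $2k-2$ endpoints inside $C[u,v]$ so that consecutive chords interleave and non-consecutive ones nest. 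Since the tail of $\curve{u}$ was in the region, the new crossing with $\curve{x_1}$ lands after all existing crossings on $\curve{u}$, which is exactly what the order-condition demands; this is the same bookkeeping you describe.

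Where you genuinely diverge is the reduction to the $2$-connected case, and your route has a real problem. You propose to induct on the block-cut tree and attach further blocks at a cut vertex $c$ ``inside a private region incident to $c$,'' using the ``disjoint free ends'' of $\curve{c}$. But $\curve{c}$ is a single chord with only two free ends, each lying in one private region; if $c$ lies in three or more blocks (or has pendant edges in addition to two blocks), there is no free end left for the third attachment, and inserting its crossings between an end of $\curve{c}$ and an already-built block's crossings would have to happen in a region that is no longer guaranteed crossing-free. You would also need to argue that the blocks can be assigned to the two ends consistently with the cyclic order at $c$. The paper sidesteps all of this with a one-line reduction: every outer-planar graph is an induced subgraph of a $2$-connected outer-planar graph, so one builds the representation for the augmented graph and deletes the curves of the added vertices (deletion cannot destroy the order-preserving property). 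You should either adopt that reduction or supply a substantially more careful treatment of cut vertices of block-degree at least three.
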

\begin{proof}
It suffices to prove the claim for a 2-connected outer-planar graph $G$ since
every outer-planar graph $G'$ is an induced subgraph of a 2-connected 
outer-planar graph $G$, and therefore a string representation
for $G$ also yields one for $G'$ by deleting curves of vertices in $G-G'$.

We create a representation $\calR$ while building up the graph via adding
ears, and maintain curve directions and private regions as explained before.
Each private region $\region{u}{v}$ is bounded by parts of circle $C$ and a
chord of $C$ and does not contain the crossing of $\curve{u}$ and $\curve{v}$.  
Further, the tail of $\curve{u}$ 
and the head of $\curve{v}$ are in the interior of the circular arc that bounds
$\region{u}{v}$.

In the base case, $G$ is an edge $(u,v)$ which can be represented by two
chords through the center of $C$.
See Fig.~\ref{fig:circleGraph}.   
\todo{Unimportant change-request: The figure feels different, style-wise,
from the ones you drew with inkscape.  Perhaps redraw.}
We reserve two private regions for $(u,v)$, because the outer-face of
a single-edge graph should be viewed as containing this edge twice (we can add
ears twice at it).  All conditions are easily verified.

\begin{figure}[ht]
\includegraphics[width=0.2\linewidth,page=1]{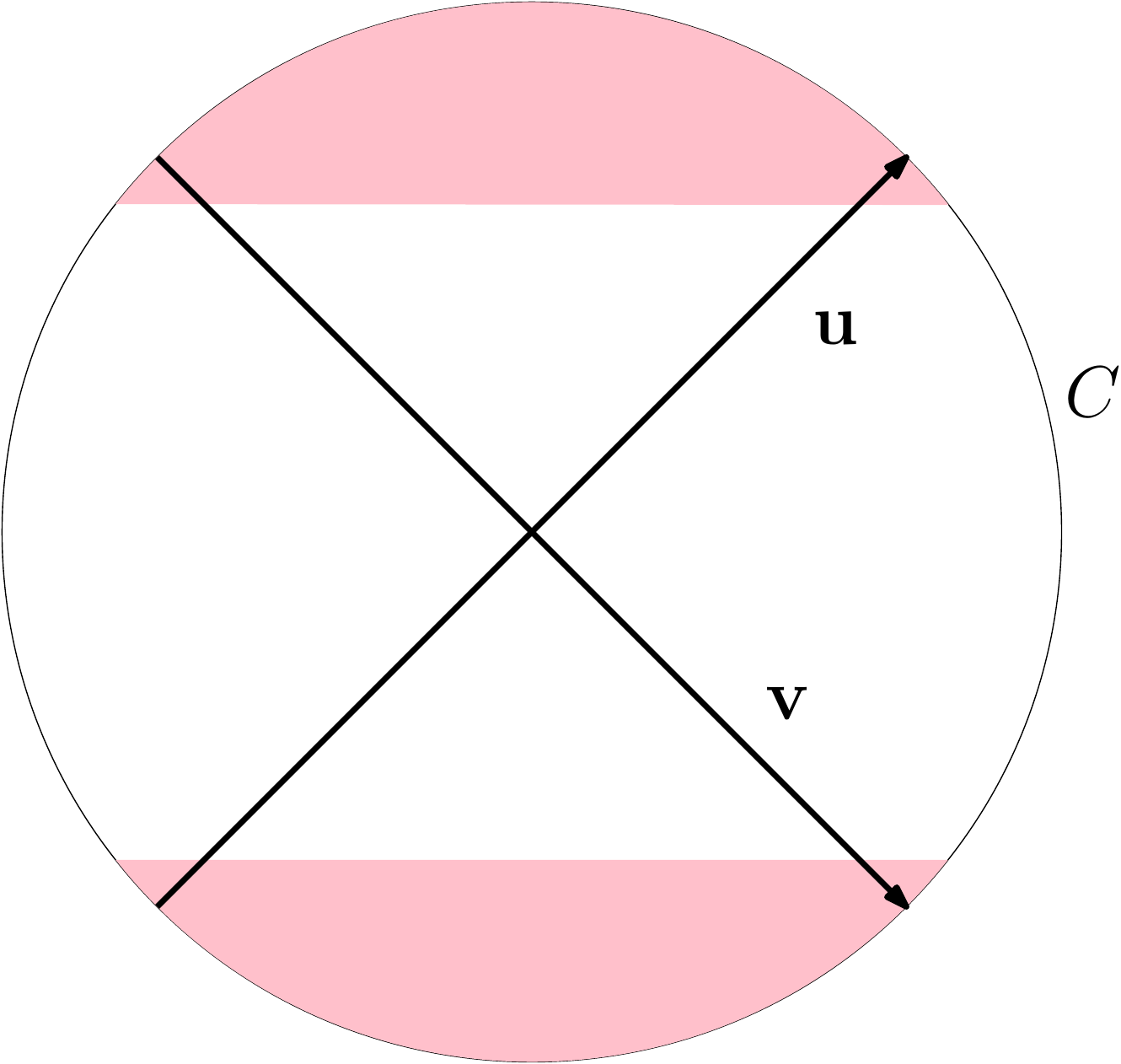}
\hspace*{\fill}
\hspace*{2mm}
\includegraphics[width=0.18\textwidth]{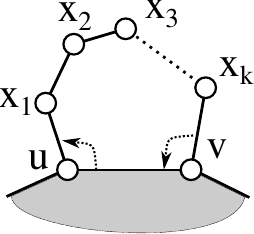}
\hspace*{\fill}
\includegraphics[width=0.56\linewidth,page=2,trim= 0 0 0 200,clip]{figures_chord_base.pdf}
\caption{The base case, and adding chords for an ear.}
\label{fig:circleGraph}
\end{figure}

For the induction step, let us assume that $G$ was obtained by adding an 
ear $P=u,x_1,\dots,x_k,v$ at some edge $(u,v)$, with $u$ the counter-clockwise 
neighbour of $v$ on the outer-face.  Let $C[u,v]$ be the arc of $C$ between
the tail of $\curve{u}$ and the head of $\curve{v}$ that
lies inside $\region{u}{v}$.  Let $u'$ and $v'$ be two points on $C$
just outside $C[u,v]$ but still within $\region{u}{v}$.  If $k=1$, then 
we add $x_1$ by using chord $\overline{u' v'}$ for $\curve{x_1}$.  If $k>1$,
then we insert $2k-2$ points on the interior of $C[u,v]$ and create chords
for $\curve{x_1},\dots,\curve{x_k}$ so that everyone intersects as required.
See Fig.~\ref{fig:circleGraph}, which also shows the private regions that we
define for the new outer-face edges.

Since $\region{u}{v}$ was convex, all new curves are inside it and do not
intersect any other curves.  The orientation of these new curves is determined
by the order-condition: $\curve{x_i}$ should be oriented so that it intersects first
$\curve{x_{i+1}}$ (where $x_0:=u$) and then $\curve{x_{i-1}}$ (where $x_{k+1}:=v$).
In particular this means that the private region $\region{x_i}{x_{i+1}}$ 
contains
the tail of $\curve{x_i}$ and the head of $\curve{x_{i+1}}$, and hence 
satisfies the condition on private regions.

It remains to check that the order-condition is satisfied for $\curve{u}$.
Since $\region{u}{v}$ contained the tail of $\curve{u}$, this means that $\curve{x_1}$
becomes the first curve to be intersected by $\curve{u}$, which is correct
since $x_1$ is the clockwise neighbour of $u$ on the outer-face.  Likewise one
argues that the order-condition holds for $\curve{v}$.   
Hence all conditions hold, and after repeating for all ears we obtain
an order-preserving representation as intersection graph of chords
of a circle.
\qed
\end{proof}

\subsection{$B_1$-VPG representation}

Now we create, for any outer-planar graph,
a $B_1$-VPG representation 
that is order-preserving and outer-string.    However, the ends
will not be on a circle; instead they will lie on a closed curve $S$ 
that we maintain throughout the construction and that surrounds the entire
representation $\calR$ without truly intersecting any curve.  All vertices
are 1-bend poly-lines with slopes $\pm 1$ (after rotating by 45$^\circ$ this
gives the $B_1$-VPG representation); this allows us to use an orthogonal
curve for $S$.
Fig.~\ref{fig:types} illustrates types of private regions that we will use for 
this construction:  $\region{u}{v}$ contains
no bend of $\curve{u}$ or $\curve{v}$, and it is an isosceles right triangle
whose hypotenuse 
lies on $S$.

\begin{figure}
\hspace*{\fill}
\includegraphics[width=0.6\textwidth,trim=0 0 70 0,clip]{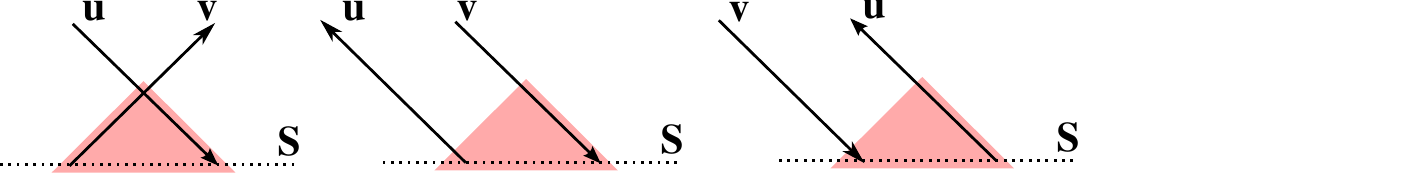}
\hspace*{\fill}
\includegraphics[width=0.3\textwidth]{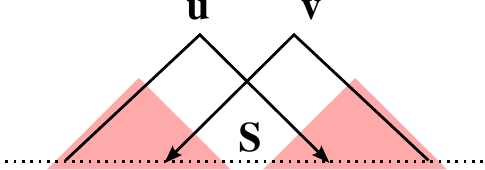}
\hspace*{\fill}
\caption{Three types of private regions (three more can be obtained by
flipping horizontally), and the base case.}
\label{fig:types}
\label{fig:edge}
\end{figure}

\begin{theorem}
\label{thm:rotation}
Every outer-planar graph $G$ has an order-preserving outer-1-string $B_1$-VPG-representation $\calR$.
\end{theorem}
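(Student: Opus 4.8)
The plan is to mirror the inductive strategy of Theorem~\ref{thm:circleGraph} exactly, but to replace straight chords of a circle with 1-bend polylines of slopes $\pm 1$, and to replace the circle $C$ by the maintained enclosing curve $S$. As before I would reduce to the 2-connected case (an induced-subgraph deletion argument suffices, since deleting curves destroys neither the $B_1$-VPG property nor the order-preservation nor the outer-string property), and then build $G$ by repeatedly adding ears to a starting edge $(u,v)$, using the ear-decomposition of \cite{cit:govindran}. Throughout I would maintain the same invariants: curve directions satisfying the order-condition (the broken face-vertex-incidence at every vertex is the one with the outer-face), and a private region $\region{u}{v}$ for every outer-face edge. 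The geometric invariant now reads: each $\region{u}{v}$ is an isosceles right triangle whose hypotenuse lies on $S$, contains the tail of $\curve{u}$ and the head of $\curve{v}$ on that hypotenuse, contains no bend of either curve, and meets no other curve.

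The base case is a single edge $(u,v)$, drawn as in Fig.~\ref{fig:edge} with two private regions reserved (again viewing the single edge as appearing twice on the outer-face). For the induction step, suppose $G$ is obtained by adding an ear $P = u, x_1, \dots, x_k, v$ across the outer-face edge $(u,v)$, with $u$ the counter-clockwise neighbour of $v$. I would work entirely inside the triangular private region $\region{u}{v}$, whose hypotenuse lies on $S$ and carries the tail of $\curve{u}$ and the head of $\curve{v}$. The key local move is to route the new curves $\curve{x_1}, \dots, \curve{x_k}$ as slope-$\pm1$ 1-bend polylines inside this triangle so that consecutive ones cross, $\curve{x_1}$ crosses $\curve{u}$, $\curve{x_k}$ crosses $\curve{v}$, and each new curve has one end reaching the hypotenuse (hence $S$, hence the contour). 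The private-region types catalogued in Fig.~\ref{fig:types} are precisely the gadgets that let me subdivide the hypotenuse and peel off a smaller isosceles right triangle as the new private region $\region{x_i}{x_{i+1}}$ for each newly created outer-face edge, while also updating $S$ to run along the hypotenuses of these smaller triangles.

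After placing the ear I would verify the invariants one at a time, as in the circle-chord proof. Orienting $\curve{x_i}$ to cross first $\curve{x_{i+1}}$ (with $x_0 := u$) and last $\curve{x_{i-1}}$ (with $x_{k+1} := v$) makes each $\region{x_i}{x_{i+1}}$ hold the correct tail and head, so the private-region condition is inherited; the order-condition for $\curve{u}$ and $\curve{v}$ holds because $\curve{x_1}$ becomes the new first crossing of $\curve{u}$ and $\curve{x_k}$ the new last crossing of $\curve{v}$, matching the updated outer-face neighbours. Since all new curves live inside the old $\region{u}{v}$, which met no other curve, no unwanted crossings are introduced, and the 1-string and $B_1$-VPG properties are preserved. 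At the end every vertex curve has an end on $S$, which bounds the contour, giving the outer-string property; the grid-size bound then follows from there being $n$ vertices and at most $n$ bends.

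The main obstacle I expect is purely geometric rather than combinatorial: unlike straight chords, slope-$\pm1$ 1-bend polylines must realize the required crossing pattern of an entire path $x_1, \dots, x_k$ inside one right triangle while each reaches the hypotenuse \emph{and} leaves behind $k+1$ disjoint isosceles-right sub-triangles with hypotenuses on $S$ of the correct orientation. Making the case $k>1$ work cleanly — choosing bend positions and the $\pm1$ slope at each step so that the consecutive crossings, the two boundary crossings with $\curve{u}$ and $\curve{v}$, and the nesting of private triangles all coexist without a curve bending twice or crossing something twice — is where the real care lies, and it is exactly what the six private-region types in Fig.~\ref{fig:types} are designed to handle by giving a finite menu of admissible local configurations to attach against.
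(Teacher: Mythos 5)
Your proposal follows the paper's own proof essentially verbatim: reduce to the 2-connected case, add ears, maintain directed curves satisfying the order-condition together with isosceles-right-triangle private regions whose hypotenuses lie on the enclosing curve $S$, and resolve the geometry by a finite case analysis on the slopes of $\curve{u},\curve{v}$ in $\region{u}{v}$ and on whether $k=1$ or $k>1$ (with $S$ rerouted by local detours in the single-vertex case). The geometric difficulty you flag for $k>1$ is handled in the paper at the same level of detail you propose, namely by exhibiting the explicit configurations of Figs.~\ref{fig:case1}--\ref{fig:case3}.
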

\begin{proof}
As before it suffices to prove the claim for 2-connected outer-planar 
graphs $G$.  We proceed by induction
on the number of vertices,  building $\calR$ while adding ears. 
In the base case, $G$ is 
an edge $(u,v)$ which can be represented by two 1-bend curves 
positioned and oriented as shown in Fig.~\ref{fig:edge}, which also
shows the private region.  We use a horizontal segment for $S$  (this can
be expanded into a closed curve surrounding $\calR$ arbitrarily). 
\todo{Unimportant change-request: Change $S$ into a closed curve}

For the induction step, let us assume that $G$ was obtained by adding an ear $P=u,x_1,\dots,x_k,v$ at some edge $(u,v)$, with $u$ the counter-clockwise neighbour of $v$ on the outer-face.  After possible rotation 
the hypotenuse 
of the private region
$S_{\curve{u}\curve{v}}$ is horizontal with $\region{u}{v}$ above it. 
We distinguish cases: 
\begin{enumerate}
\item 
\textit{$\curve{u}$ and $\curve{v}$ have different slopes 
in $\region{u}{v}$ and $k=1$ (i.e. we add one
vertex $x$).}
\todo{Unimportant change-request for the representations:
Make the parts of $\curve{u}$ and $\curve{v}$ that get cut dotted.}

\begin{figure}
\hspace*{\fill}
\includegraphics[width=0.18\textwidth]{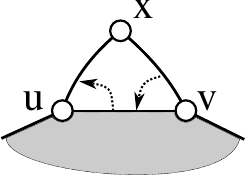}
\hspace*{\fill}
\includegraphics[width=0.38\textwidth]{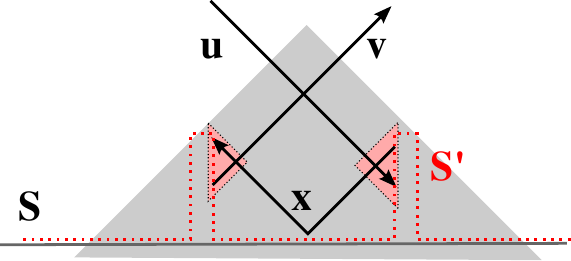}
\hspace*{\fill}
\includegraphics[width=0.38\textwidth]{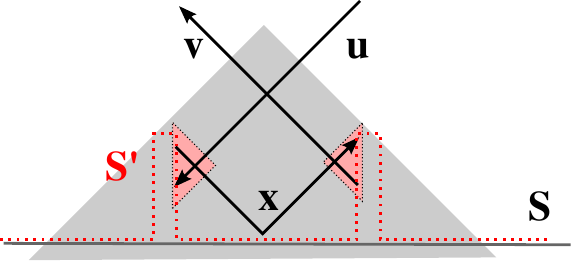}
\hspace*{\fill}
\caption{Adding a single node if $\curve{u}$ and $\curve{v}$ have different slopes.} 
\label{fig:case1}
\end{figure}

We add a 1-bend curve $\curve{x}$ with the bend pointing downwards.
See Fig.~\ref{fig:case1}, which also shows the private regions that we
define for $(u,x)$ and $(x,v)$.
Curve $\curve{x}$ fits entirely inside $\region{u}{v}$ by placing the bend in the interior of 
$\region{u}{v}$ and shortening $\curve{u}$ and $\curve{v}$ appropriately so that the ends of $\curve{x}$ are
vertically aligned with those of $\curve{u}$ and $\curve{v}$. 
We can now easily find a new curve $S'$ by adding ``detours''
to $S$ that reach the hypotenuses of the new private regions. 
These detours are inside $\region{u}{v}$ and hence intersect no other
curves (since we shortened $\curve{u}$ and $\curve{v}$). So the 
new curve $S'$ is a closed curve that surround the new representation
as desired.

The orientation of $\curve{x}$ is again determined by the order-condition,
and exactly as in Theorem~\ref{thm:circleGraph} one argues that this respects
the order-condition at $\curve{u}$ and $\curve{v}$, since our choice
of curve for $\curve{x}$ ensures that it crosses $\curve{u}$ {\em after}
the crossing of $\curve{u}$ with $\curve{v}$.

\item \textit{$\curve{u}$ and $\curve{v}$ have different slopes in
$\region{u}{v}$ and $k>1$ (i.e. we add at least two vertices $x_1,\dots,x_k$.)}

\begin{figure}
\hspace*{\fill}
\includegraphics[width=0.18\textwidth]{figures_fig5_graph.pdf}
\hspace*{\fill}
\includegraphics[width=0.38\textwidth]{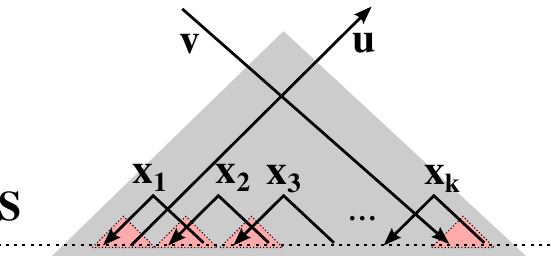}
\hspace*{\fill}
\includegraphics[width=0.38\textwidth]{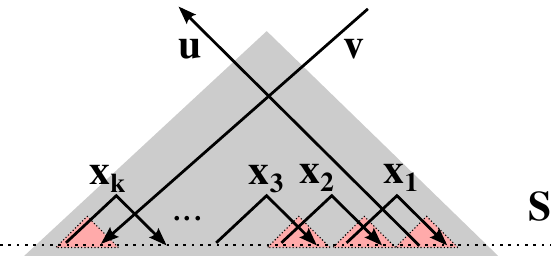}
\hspace*{\fill}
\caption{Adding 2 or more nodes if $\curve{u}$ and $\curve{v}$ have different slopes. }
\label{fig:case2}
\end{figure}

We add a path of 1-bend curves $\curve{x_1}, \curve{x_2}, \ldots, \curve{x_k}$ with their bends at the top, and define private regions as illustrated in Fig.~\ref{fig:case2}.  Each curve $\curve{x_i}$ is oriented as required by the order-condition, and again one verifies the order-condition for $\curve{u}$ and $\curve{v}$.  We can re-use the same $S$.
 
\item \textit{$\curve{u}$ and $\curve{v}$ have the same slope inside
$\region{u}{v}$.}

\begin{figure}
\hspace*{\fill}
\includegraphics[width=0.18\textwidth]{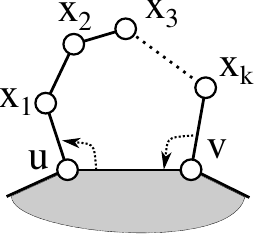}
\hspace*{\fill}
\includegraphics[width=0.38\textwidth]{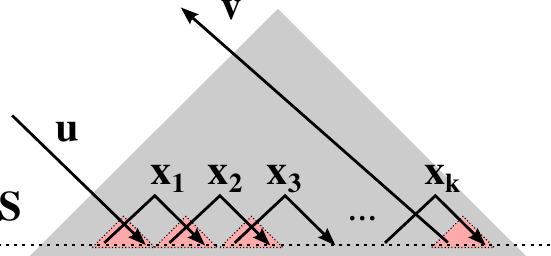}
\hspace*{\fill}
\includegraphics[width=0.38\textwidth]{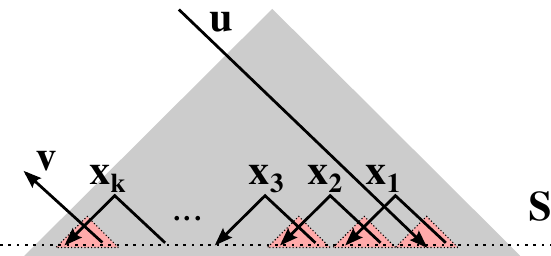}
\hspace*{\fill}
\caption{Adding one or more vertices if $\curve{u}$ and $\curve{v}$ have the same slope.  We only show two of the four possible configurations.}
\label{fig:case6}
\label{fig:case3}
\end{figure}
We add a path of 1-bend curves $\curve{x_1}, \curve{x_2}, \ldots, \curve{x_k}$ (possibly $k=1$) with their bends at the top, and define private regions as illustrated in Fig.~\ref{fig:case3}.  Each curve $\curve{x_i}$ is oriented as required by the order-condition, and one verifies all conditions using the same $S$.

\end{enumerate}
After having represented the entire graph in this way, we are 
order-preserving due to the order-condition, outer-string due
to poly-line $S$, and $B_1$-VPG (after a 45$^\circ$-rotation) since every
curve has one bend.
\qed
\end{proof}

In our $B_1$-VPG-representation, every vertex-curve is an \L
in one of the four possible rotations 
\lefthalfcup, \righthalfcup, \lefthalfcap, \righthalfcap.
(All four may be used, since private regions get rotated in Case 1.)
We would have preferred a representation that uses \L (or
the two shapes \lefthalfcup and \righthalfcup), because then  the
stretching-techniques 
by Middendorf and Pfeiffer \cite{cit:stretching} could have been applied
to obtain another segment-representation.  
It is easy to create a representations with \L only if we need
not be order-preserving (use \rotatebox{45}{\righthalfcap} in Case 1) or need
not be outer-string (see also Lemma~\ref{lem:SP}), but finding an
outer-string order-preserving representation using only {\L}s
remains open.

\section{Beyond outer-planar graphs?}

One wonders what other graph classes might have order-preserving
1-string representations, preferably outer-string ones.  We
study this here for some graph classes.
%
%
We start with the {\em series-parallel graphs}, which are the same as the
partial 2-trees, and hence
generalize outer-planar graphs.  

\begin{lemma}
\label{lem:SP}
Every series-parallel graph $G$ has a 1-string representation with {\L}s
that is order-preserving for some planar embedding of $G$.
\end{lemma}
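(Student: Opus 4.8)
The plan is to proceed exactly as in the proofs of Theorems~\ref{thm:circleGraph} and~\ref{thm:rotation}, building $\calR$ incrementally while maintaining private regions, but now exploiting two new freedoms: we may choose the embedding, and (having dropped the outer-string requirement) we may let the ends of curves dangle in the interior rather than reach the contour. First I would reduce to the case of a $2$-tree. Since $G$ is a connected partial $2$-tree, it is a subgraph of some $2$-tree $T$, and I will run the construction along the vertex order $v_1,\dots,v_n$ witnessing that $T$ is a $2$-tree (so $v_1v_2$ is an edge and each later $v_i$ is added to an already-present edge $(a,b)$). When I insert $\curve{v_i}$ I let it cross only those of $\curve{a},\curve{b}$ that are genuine neighbours of $v_i$ in $G$; because every edge of $T$, hence of $G$, is witnessed at the moment its later endpoint is added, omitting the unwanted crossings (and, at the end, deleting the curves of any vertices of $T$ not in $G$) yields a representation of $G$ itself. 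The embedding is not given in advance: I build it together with $\calR$, declaring the clockwise order of the children stacked on each edge to be the order in which their curves are placed. This is consistent and planar, so it suffices to make $\calR$ order-preserving for this self-chosen embedding.

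The invariant I would carry is: after processing $v_1,\dots,v_i$, every curve is a translate of the single shape \L, and for every edge $(a,b)$ of the current graph that may still receive children I keep a private region $\region{a}{b}$ containing a free end of $\curve{a}$ and a free end of $\curve{b}$ and meeting no other curve. The base case is the single edge $v_1v_2$, represented by two crossing copies of \L with private regions as in Fig.~\ref{fig:edge}. For the inductive step, the vertices $x_1,\dots,x_m$ that $T$ attaches to one edge $(a,b)$ are pairwise non-adjacent, so I place them as a family of pairwise-disjoint copies of \L inside $\region{a}{b}$, each crossing the appropriate subset of $\{\curve{a},\curve{b}\}$ and stacked so that they do not interfere. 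Each $\curve{x_j}$ is oriented by the order-condition (first $\curve{a}$, last $\curve{b}$, say), which, exactly as in Theorem~\ref{thm:rotation}, makes the crossing order along $\curve{a}$ and $\curve{b}$ agree with the chosen rotation. For each new edge I then carve a fresh private region out of the leftover of $\region{a}{b}$.

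The main obstacle is the geometric gadget: I must exhibit, inside a private region, a placement that crosses two given copies of \L by a new copy of \emph{the same orientation} \L, that produces again private regions of the same admissible type, and that never forces a rotated copy of the shape. This is precisely the point at which the all-\L construction failed to stay outer-string in Theorem~\ref{thm:rotation}: there the demand that both ends reach the contour forced the rotated shapes \lefthalfcap, \righthalfcup, \righthalfcap in Case~1. Here, because we no longer insist on being outer-string, I can put the bend and both ends of each new curve deep inside $\region{a}{b}$, which removes the need to reach $S$ and hence the need to rotate; the extra room also lets the leftover of the region be split into admissible regions for the new edges. Verifying this gadget --- that a single orientation of \L always suffices and that the new private regions are of the same kind, so the induction closes --- is the only real work, after which order-preservation and the ``exactly the edges of $G$'' bookkeeping follow verbatim from the arguments already used for Theorem~\ref{thm:rotation}.
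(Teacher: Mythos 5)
Your proposal takes a genuinely different route from the paper --- the paper does not build the representation incrementally with private regions at all. Instead it starts from a known \emph{contact} representation of the $2$-tree by touching, unrotated {\L}s (each edge realized by the end of one {\L} lying on the interior of the other), extends each {\L} slightly to turn contacts into crossings, and then \emph{derives} the planar embedding from the picture: place the vertex $v$ at the bend of $\curve{v}$ and connect it to the touching points, which is planar precisely because all attaching curves approach the vertical leg from the left and the horizontal leg from below. Order-preservation is then automatic. Unfortunately, your route has a genuine gap that this detour is designed to avoid. Your invariant keeps, for every edge $(a,b)$ that may still receive children, a private region containing a free \emph{end} of $\curve{a}$ and a free \emph{end} of $\curve{b}$. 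In the outer-planar setting this is tenable because each vertex lies on exactly two outer-face edges that can receive ears, so its two curve-ends suffice. In a $2$-tree it is not: a vertex $a$ can be an endpoint of arbitrarily many edges onto which later vertices are stacked (stack $c_1,\dots,c_m$ on $(a,b)$; then $(a,b),(a,c_1),\dots,(a,c_m)$ all remain eligible), and each such edge would need its own private region containing a free end of $\curve{a}$ --- but $\curve{a}$ has only two ends. The contact-representation approach sidesteps this because new curves attach to \emph{interior} points of the two legs of $\curve{a}$, of which there is no shortage, and the derived rotation at $a$ automatically interleaves the attachments from different edges correctly; your scheme says nothing about how crossings coming from distinct private regions along $\curve{a}$ are to be ordered consistently with a single planar rotation.

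Separately, even where your invariant could be maintained, the step you yourself flag as ``the only real work'' --- exhibiting a placement of a same-orientation {\L} inside a private region that crosses the required subset of $\{\curve{a},\curve{b}\}$ and leaves behind private regions of the same admissible type --- is asserted rather than constructed. Since the entire content of the lemma lives in that gadget (the outer-planar construction demonstrably could not keep a single orientation), deferring it leaves the proof incomplete. If you want to salvage an incremental argument, you should replace ``free end'' by ``free sub-segment of a leg'' in the invariant and prove the gadget explicitly; alternatively, adopt the paper's strategy of first building a touching-{\L} contact representation of the $2$-tree and reading off the embedding afterwards.
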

\begin{proof}
It is easy to show that every 2-tree has a representation by touching
true {\L}s,   i.e., each vertex is assigned an {\L} (not  rotated
and not degenerated into a line segment),
curves are disjoint except at ends, and $(u,v)$ is an
edge if and only if the end of $\curve{u}$ lies on the interior of $\curve{v}$ 
or vice versa.%
\footnote{We have not been able to find a direct reference for this, but it follows
for example from the works of Chaplick et al. 
\cite{ChaplickKobourovUeckert2012} or with an iterative approach similar to 
the 6-sided contact representations in \cite{AlamBiedl2011}. } 
See also Fig.~\ref{fig:SP}. 
Extending
the {\L}s slightly gives a 1-string representation, and it is order-preserving
for a planar embedding easily derived from the touching {\L} representation.
\todo{minor rewordings}
Details are provided in Appendix~\ref{sec:appendix}.
\qed
\end{proof}

\begin{figure}[t]
\hspace*{\fill}
\includegraphics[width=0.22\linewidth]{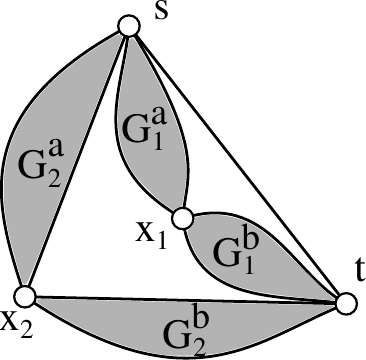}
\hspace*{\fill}
\includegraphics[width=0.25\linewidth]{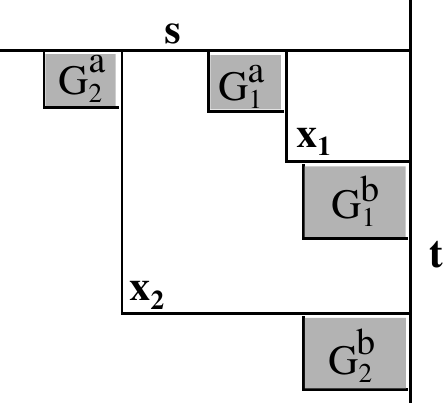}
\hspace*{\fill}
\includegraphics[width=0.25\linewidth]{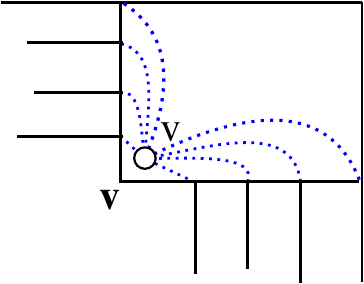}
\hspace*{\fill}
\caption{Representing series-parallel graphs by touching {\protect\L}s, and converting
this into a planar drawing with the same order.}
\label{fig:SP}
\end{figure}

It would be interesting to know whether this result can be extended to the
so-called {\em planar Laman-graphs}, which have a representation by
touching {\L}s \cite{KobourovUeckertVerbeek}, but not all {\L}s
are necessarily in the same rotation and so it is not clear whether
this is order-preserving. Of particular interest would be planar bipartite
graphs, which can even be represented by horizontal and vertical touching
line segments \cite{FMP91}, but again it is not clear how to make this
order-preserving.

As for having strings additionally end at the contour for series-parallel
graphs: this is not always possible.
Let $H$ be the graph obtained by subdividing every edge in a $K_{2,3}$; one 
verifies that $H$ is series-parallel.  It is easy to see (see also
\cite{Cabello2016}) that $H$ cannot be outer-string, since $K_{2,3}$ is not
outer-planar.  So $H$ has no outer-string representation, much less one
that is 1-string and order-preserving.

Now we turn to partial 3-trees.
We showed in Theorem~\ref{thm:notOrder} that there exist planar 3-trees
(hence partial 3-trees) that do not have an order-preserving 1-string
representation.  We now study some subclasses of partial 3-trees
that are superclasses of outer-planar graphs.

An {\em IO-graph} is a planar graph $G$ that has an independent set $I$ such
that $G-I$ is a 2-connected outer-planar graph $O$ for which all vertices
in $I$ are inside inner faces of $O$.  A {\em Halin}-graph is a graph
that consists of a tree $T$ and a cycle $C$ that connects all leaves of $T$.
Both types of graphs are well-known to be partial 3-trees.
In \cite{cit:cccg}, we gave 1-string 
representations for both Halin graphs and IO-graphs; the latter uses only 
unrotated {\L}s.
Independently, Francis and Lahiri also constructed 1-string representations
of Halin-graphs, using only unrotated {\L}s \cite{cit:francis}.
Inspection of both constructions shows that 
these respect the standard planar embedding (where $O$ respectively $C$ 
is one face).  We hence have:

\begin{theorem}[based on \cite{cit:cccg,cit:francis}]
Every IO-graph and every Halin-graph has an order-preserving 1-string
representation in which every vertex is an {\L}.
\end{theorem}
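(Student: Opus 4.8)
The plan is to treat this as a \emph{verification} result rather than a fresh construction: the representations already exist in \cite{cit:cccg,cit:francis}, they already use unrotated {\L}s, so the only new content is to confirm that they are order-preserving for the standard embedding (in which the outer cycle of $O$, respectively the cycle $C$, bounds a single face). First I would fix, once and for all, how to break the cyclic order at each vertex: walking along an unrotated {\L} from the free end of its vertical arm, through the bend, to the free end of its horizontal arm visits all crossings of $\curve{v}$ in one fixed linear order, and I would declare this traversal to be the reading order at $v$. The key geometric observation is that for an unrotated {\L} the crossings on the vertical arm occur top-to-bottom and those on the horizontal arm left-to-right, so once the placement is known the reading order is forced; this is exactly the bookkeeping already made explicit for touching-{\L}s in Lemma~\ref{lem:SP} and Fig.~\ref{fig:SP}, where the touching representation is converted into a planar drawing ``with the same order.''

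For IO-graphs I would invoke the {\L}-representation of the $2$-connected outer-planar backbone $O$ from \cite{cit:cccg}, whose order-preservation for the embedding with $O$ as one face is exactly the property guaranteed (and corroborated by the order-condition used in Theorem~\ref{thm:rotation}); there the broken face-vertex-incidence at each backbone vertex is the one facing the outer-face. Then, for each vertex $w$ in the independent set $I$, the construction places an {\L} inside the inner face $f$ of $O$ that contains $w$. Since $I$ is independent and all neighbours of $w$ lie on $\partial f$, I only need $\curve{w}$ to cross precisely those boundary curves. The crucial point is that the cyclic order of $w$'s neighbours along $\partial f$ \emph{is} the rotation at $w$, and that the {\L} for $w$ can be routed inside $f$ so that its two arms encounter the boundary curves in this cyclic order; confirming that the placement of \cite{cit:cccg} induces exactly this boundary order as the reading order along $\curve{w}$ is where the inspection does its real work.

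For Halin-graphs I would argue analogously, using the {\L}-representations of \cite{cit:cccg,cit:francis}. Here the relevant embedding has $C$ as the outer face and the interior is drawn from the tree $T$, with each tree vertex realised as an {\L} whose arms meet the curves of its children and of its parent. I would verify order-preservation by induction up $T$: at a leaf the claim is immediate, and at an internal vertex the child subtrees are placed in nested fashion, so that traversing the {\L} down the vertical arm and along the horizontal arm meets the incident tree-edges in their clockwise rotation, with the edge toward the parent (or toward $C$) serving as the broken incidence.

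The main obstacle is that neither source construction is reproduced here, so the argument is genuinely an \emph{inspection}, and the substance lies in confirming that the specific placements in \cite{cit:cccg,cit:francis} induce on every curve a crossing order equal to the chosen clockwise rotation, not merely some permutation of it. I expect the two delicate points to be: (i) checking that the break-point choice (bend versus free ends) is mutually consistent across \emph{all} vertices simultaneously, as flagged in the footnote accompanying the definition of order-preserving; and (ii) the face-interior vertices of IO-graphs, where one must ensure that $\curve{w}$ crosses the boundary curves of $f$ in the boundary's cyclic order while creating neither a second crossing with any of them nor a crossing with a non-neighbour.
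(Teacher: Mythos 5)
Your proposal matches the paper's own treatment: the paper likewise gives no fresh construction but simply cites the {\L}-representations of \cite{cit:cccg,cit:francis} and asserts that inspection shows they respect the standard embedding (with $O$, respectively $C$, as one face). Your elaboration of how that inspection would proceed (reading order along an unrotated {\L}, the break at the outer-face incidence, the placement of independent-set vertices inside inner faces) is consistent with, and somewhat more explicit than, what the paper records.
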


In these constructions, the ends of the strings are not on the outer-face,
and we now show that this is unavoidable.  This is obvious for Halin-graphs,
since the subdivided $K_{2,3}$ is an induced subgraph of a Halin-graph.
As for IO-graphs, define the {\em wheel} $W_n$
be the graph that consists of a cycle $C=\{v_1,\dots,v_n\}$ with $n$ vertices 
and one universal vertex $c$ connected to all of them.  Let the {\em
extended wheel-graph} $W^+_n$ be the wheel-graph $W_n$ with additionally
a vertex $w_i$ incident to $v_i$ and $v_{i+1}$ for $i=1,\dots,n$ (and
$w_{n+1}:=w_1$).  
Notice that $W_n^+$ is an IO-graph. The proof of the following is presented in Appendix~\ref{sec:appendix}.

\begin{theorem}
\label{thm:IOnotEndOrder}
For $n\geq 7$, the IO-graph $W^+_n$ has no order-preserving outer-1-string 
representation.
\end{theorem}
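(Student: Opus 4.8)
The plan is to exploit the rigid cyclic structure that order-preservation forces on the wheel $W_n\subseteq W^+_n$, and then to show that the curve $\curve{c}$ of the central vertex cannot reach the contour. Assume such a representation $\calR$ exists. Write $p_i=\curve{c}\cap\curve{v_i}$ and $q_i=\curve{v_i}\cap\curve{v_{i+1}}$ (indices mod $n$); these are single points since $\calR$ is 1-string. Because the rotation around $c$ is $v_1,\dots,v_n$, order-preservation lets me assume, after renaming, that $\curve{c}$ meets $\curve{v_1},\dots,\curve{v_n}$ in exactly this order, so its two ends lie beyond $p_1$ and beyond $p_n$, i.e.\ both ends emanate into the ``wrap-around'' gap of the rotation at $c$, which corresponds to the inner triangular face $\{c,v_n,v_1\}$.

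First I would build a Jordan curve out of the cycle. Since non-consecutive $v_i,v_j$ are non-adjacent in $W^+_n$, the stretches $\curve{v_i}[v_{i-1},v_{i+1}]$ (the part of $\curve{v_i}$ from $q_{i-1}$ to $q_i$) pairwise meet only at the shared points $q_i$, so $\Gamma:=\bigcup_{i=1}^n \curve{v_i}[v_{i-1},v_{i+1}]$ is a simple closed curve. The contour is unbounded while $\Gamma$ is bounded, hence the contour lies in the outside of $\Gamma$, and therefore every string-end placed on the contour lies outside $\Gamma$. The only crossings of $\curve{c}$ with $\Gamma$ are those $p_i$ that happen to lie on the stretch $\curve{v_i}[v_{i-1},v_{i+1}]$, and the only crossings of $\curve{w_i}$ with $\Gamma$ are its crossings with $\curve{v_i},\curve{v_{i+1}}$ when those lie on the corresponding stretches. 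The intended contradiction is then purely topological: I want to prove that \emph{both} ends of $\curve{c}$ lie strictly inside $\Gamma$, which already contradicts outer-string, since an end on the contour must be outside $\Gamma$.

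To establish the enclosure I would use order-preservation at each $v_i$ together with the vertices $w_i$. The rotation at $v_i$ is $(c,v_{i-1},w_{i-1},w_i,v_{i+1})$, so $c$ is cyclically separated from $\{w_{i-1},w_i\}$ by $v_{i-1}$ and $v_{i+1}$; consequently the stretch $\curve{v_i}[v_{i-1},v_{i+1}]$ contains the crossing $p_i$ exactly when it does \emph{not} contain the two $w$-crossings, and conversely. Writing $\epsilon_i=1$ if $p_i\in\Gamma$ and $0$ otherwise, a parity count of the crossings of $\curve{w_i}$ with $\Gamma$ (it meets $\Gamma$ at its $\curve{v_i}$-crossing iff $\epsilon_i=0$, and at its $\curve{v_{i+1}}$-crossing iff $\epsilon_{i+1}=0$), combined with the requirement that each $\curve{w_i}$ reach the contour, forces all $\epsilon_i$ to agree. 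In the case $\epsilon_i\equiv1$, every $p_i$ lies on $\Gamma$ and $\curve{c}$ has spent its unique crossing with each $\curve{v_i}$; since both end-tails emanate into the region of the inner face $\{c,v_n,v_1\}$, which lies inside $\Gamma$, and $\curve{c}$ has no crossing of $\Gamma$ left to cross back out, both ends are trapped inside $\Gamma$. The case $\epsilon_i\equiv0$ is symmetric and instead traps the curves $\curve{w_i}$, whose two crossings with $\curve{v_i},\curve{v_{i+1}}$ now lie on $\Gamma$ and enclose them. Either way outer-string fails. The hypothesis $n\ge 7$ would enter through a counting step in the spirit of Lemma~\ref{lem:stellation}: each $\curve{v_i}$ has exactly one broken face-incidence, and I need enough consecutive $v_i$ whose incidences along the cycle are unbroken to guarantee that the stretches defining $\Gamma$ are crossing-free and that the escape-by-a-second-crossing is genuinely blocked.

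The step I expect to be the main obstacle is precisely the inside/outside bookkeeping for $\Gamma$ — in particular, rigorously justifying that both end-tails of $\curve{c}$ land inside $\Gamma$ rather than outside. This is where order-preservation must be used in full strength: the clean triangles $R_i$ bounded by $\curve{c}[v_i,v_{i+1}]$, $\curve{v_i}[c,v_{i+1}]$ and $\curve{v_{i+1}}[c,v_i]$ have a fixed local quadrant pattern at each $p_i$, and these local pictures must be assembled into the single global statement that $c$ is enclosed by $\Gamma$. I would also have to treat carefully the weaker ``only one end on the contour'' form of outer-string: the parity bookkeeping becomes looser there, but the enclosure conclusion (both ends inside $\Gamma$) is insensitive to it, so the contradiction survives.
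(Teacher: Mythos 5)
Your overall strategy is genuinely different from the paper's: you build a single Jordan curve $\Gamma$ out of the cycle stretches $\curve{v_i}[v_{i-1},v_{i+1}]$ and try to trap both ends of $\curve{c}$ inside it, whereas the paper's proof never uses the whole cycle at once -- it takes $R$ to be bounded by almost all of $\curve{c}$ together with short pieces of $\curve{v_1}$ and $\curve{v_n}$, observes that each of $\curve{v_3},\curve{v_4},\curve{v_5}$ crosses $\delta R$ exactly once (so has exactly one end inside $R$), forces $\curve{w_3},\curve{w_4}$ entirely outside $R$, and then builds a second region $R'$ outside $R$ that captures the \emph{other} end of $\curve{v_3}$ (or $\curve{v_5}$). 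The contradiction lands on a cycle vertex, not on $c$. Unfortunately, your route has two gaps that I believe are fatal rather than bookkeeping issues. First, the step ``the requirement that each $\curve{w_i}$ reach the contour forces all $\epsilon_i$ to agree'' does not hold: $\curve{w_i}$ meets $\Gamma$ exactly $(1-\epsilon_i)+(1-\epsilon_{i+1})$ times, and in the disagreeing case $\epsilon_i\neq\epsilon_{i+1}$ this count is $1$, so the two ends of $\curve{w_i}$ lie on \emph{opposite} sides of $\Gamma$ -- one of them is outside and may perfectly well reach the contour. No parity obstruction arises in any of the three cases ($0$, $1$ or $2$ crossings), so the dichotomy $\epsilon\equiv 0$ versus $\epsilon\equiv 1$ is not established.

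Second, even granting $\epsilon_i\equiv 1$, the conclusion that both ends of $\curve{c}$ are trapped inside $\Gamma$ is inconsistent with the same parity count you are relying on: $\curve{c}$ then crosses $\Gamma$ exactly $n$ times, and its two tails (beyond $p_1$ and beyond $p_n$) cross nothing further, so for odd $n$ (e.g.\ $n=7$) the two ends necessarily lie on opposite sides of $\Gamma$ and one of them is outside -- exactly what outer-string needs. The auxiliary claim that both tails ``emanate into the region of the inner face $\{c,v_n,v_1\}$, which lies inside $\Gamma$'' has no justification: there is no a priori correspondence between faces of the plane graph and regions of the string representation, and establishing one is essentially the whole difficulty. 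Relatedly, the role of $n\ge 7$ remains unspecified in your argument; in the paper it is used concretely to guarantee that $v_3,v_4,v_5$ are adjacent to neither $v_1$ nor $v_n$, which is what lets each $\curve{v_i}$ ($i=3,4,5$) cross $\delta R$ only once. To repair your approach you would need a different mechanism for locating a string with both ends confined to bounded regions; the paper's device of using $\curve{c}$ itself as most of the boundary of the trapping region, and then nesting a second region $R'$ via $\curve{w_4}$ and $\curve{v_5}$, is one such mechanism and is worth comparing against.
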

%

\section{Final remarks}
\label{sec:conclusions}

In this paper, we studied 1-string representations 
that respect
a planar embedding.
As for open problems,
	what other graph classes have order-preserving 1-string representations?
	A natural candidate to investigate would be the 2-outer-planar graphs,
	for which Lemma~\ref{lem:stellation} cannot be applied since a 
	triple-stellation is never 2-outer-planar.  
Other interesting candidates would be planar bipartite graphs
(or more generally planar Laman-graphs), or planar 4-connected graphs.

	Secondly, what is the complexity of testing whether an order-preserving
	1-string representation exists?  Given the NP-hardness of  the
	abstract graph realization problem 
	\cite{cit:kratochvil-II,cit:middendorf}, this is very likely NP-hard
	if we are allowed to prescribe an arbitrary rotation scheme (not
	from a planar drawing).  But is it NP-hard for plane graphs?

\todo{fairly major rewording here}
One unsatisfactory aspect of our definition of ``order-preserving'' is
that graphs with an end-contact representation (i.e., with disjoint strings 
where for every edge one string ends on the other string) do not automatically
have an order-preserving 1-string representation: We can obtain a 1-string
representation by extending the strings slightly, 
but it does not need to be order-preserving.
A reviewer hence suggested to us the following alternate model:  Thicken each
string slightly, and
consider the cyclic order of intersections while walking around the
thickened string.
Let now ``order-preserving'' mean that the cyclic order 
of neighbours around a vertex forms a subsequence of the intersections encountered while
walking ``around'' its string. 
With this, any end-contact representation becomes an order-preserving
1-string representation after extending the curves a bit.  
This includes for example planar bipartite graphs and Laman graphs.
Since this model's restriction is weaker, all our positive results transfer,
but the proofs of the negative results no longer hold.  
Are there plane graphs that do not have an order-preserving 1-string 
representation in this new model? 

\bibliographystyle{splncs03}
\bibliography{string.bib}

\begin{thebibliography}{10}
\providecommand{\url}[1]{\texttt{#1}}
\providecommand{\urlprefix}{URL }

\bibitem{AlamBiedl2011}
Alam, M.J., Biedl, T., Felsner, S., Gerasch, A., Kaufmann, M., Kobourov, S.G.:
  Linear-time algorithms for hole-free rectilinear proportional contact graph
  representations. Algorithmica  67(1),  3--22 (2013)

\bibitem{cit:cccg}
Biedl, T., Derka, M.: $1$-string ${B}_1$-{V}{P}{G}-representations of planar
  partial 3-trees and some subclasses. In: Cnd. Conf. Comp. Geometry (CCCG
  2015). pp. 37--42 (2015)

\bibitem{cit:jocg}
Biedl, T., Derka, M.: $1$-string ${B}_2$-{V}{P}{G} representations of planar
  graphs. Journal of Computational Geometry  (2016), to appear. Preliminary
  version at {\em SoCG 2015}.

\bibitem{Cabello2016}
Cabello, S., Jej\v{c}i\v{c}, M.: Refining the hierarchies of classes of
  geometric intersection graphs. CoRR  abs/1603.08974 (2016)

\bibitem{cit:chalopin-seg}
Chalopin, J., Gon{\c{c}}alves, D.: Every planar graph is the intersection graph
  of segments in the plane: extended abstract. In: {ACM} Symposium on Theory of
  Computing, ({STOC} 2009). pp. 631--638. {ACM} (2009)

\bibitem{cit:chalopin-string}
Chalopin, J., Gon\c{c}alves, D., Ochem, P.: Planar graphs have 1-string
  representations. Discrete {\&} Computational Geometry  43(3),  626--647
  (2010)

\bibitem{ChaplickKobourovUeckert2012}
Chaplick, S., Kobourov, S.G., Ueckerdt, T.: Equilateral {L}-contact graphs. In:
  Graph-Theoretic Concepts in Computer Science (WG 2013). Lecture Notes in
  Computer Science, vol. 8165, pp. 139--151. Springer (2013)

\bibitem{cit:chaplick}
Chaplick, S., Ueckerdt, T.: Planar graphs as {VPG}-graphs. J. Graph Algorithms
  Appl.  17(4),  475--494 (2013)

\bibitem{cit:mfcs}
Felsner, S., Knauer, K.B., Mertzios, G.B., Ueckerdt, T.: Intersection graphs of
  {L}-shapes and segments in the plane. Discrete Applied Mathematics  206,
  48--55 (2016)

\bibitem{cit:francis}
Francis, M.C., Lahiri, A.: {VPG} and {EPG} bend-numbers of {H}alin graphs. CoRR
   abs/1505.06036 (2015)

\bibitem{FMP91}
de~Fraysseix, H., de~Mendez, P.O., Pach, J.: Representation of planar graphs by
  segments. Intuitive Geometry  63,  109--117 (1991)

\bibitem{cit:govindran}
Govindan, R., Langston, M.A., Yan, X.: Approximating the pathwidth of
  outerplanar graphs. Inf. Process. Lett.  68(1),  17--23 (1998)

\bibitem{cit:Keil2016}
Keil, J.M., Mitchell, J.S., Pradhan, D., Vatshelle, M.: An algorithm for the
  maximum weight independent set problem on outerstring graphs. Computational
  Geometry pp.~-- (2016)

\bibitem{KobourovUeckertVerbeek}
Kobourov, S.G., Ueckerdt, T., Verbeek, K.: Combinatorial and geometric
  properties of planar {L}aman graphs. In: {SIAM} Symposium on Discrete
  Algorithms (SODA 2013). pp. 1668--1678. {SIAM} (2013)

\bibitem{cit:kratochvil-II}
Kratochv\'{i}l, J.: String graphs {II}. {R}ecognizing string graphs is
  {NP}-hard. J. Comb. Theory Ser. B  52(1),  67--78 (1991)

\bibitem{KratochvilM1994}
Kratochv{\'{\i}}l, J., Matou\v{s}ek, J.: Intersection graphs of segments. J.
  Comb. Theory, Ser. {B}  62(2),  289--315 (1994)

\bibitem{cit:stretching}
Middendorf, M., Pfeiffer, F.: The max clique problem in classes of
  string-graphs. Discrete Mathematics  108,  365--372 (1992)

\bibitem{cit:middendorf}
Middendorf, M., Pfeiffer, F.: Weakly transitive orientations, {H}asse diagrams
  and string graphs. Discrete Mathematics  111,  393--400 (1993)

\bibitem{cit:schaefer}
Schaefer, M., Sedgwick, E., \v{S}tefankovi\v{c}, D.: Recognizing string graphs
  is in {N}{P}. Journal of Computer and System Sciences  67(2),  365–380
  (2003)

\bibitem{cit:scheinerman}
Scheinerman, E.R.: Intersection Classes and Multiple Intersection Parameters of
  Graphs. Ph.D. thesis, Princeton University (1984)

\bibitem{cit:wessel}
Wessel, W., P\"{o}schel, R.: On circle graphs. In: Sachs, H. (ed.) Graphs,
  Hypergraphs and Applications, pp. 207--210. Teubner (1985)

\end{thebibliography}

\newpage
\appendix
\section{Appendix}
\label{sec:appendix}

\todo{Removed a few repeated things.}
\begin{proof}[of Lemma~\ref{lem:SP}]
Start with the representation by touching true {\L}s explained in the
main part of the paper.
%
%
For any contact-representation with true {\L}s (neither
rotated nor degenerated into a horizontal or vertical line segment), we can
create a planar drawing
that matches the order of touching-points along each {\L}.  
Namely, draw a point for $v$
slightly above and to the right of the corner of the bend in $\curve{v}$.
Connect $v$ to all touching-points on $\curve{v}$, and to the two ends of
$\curve{v}$.    
Because every curve is an {\L}, the curves whose ends touch
$\curve{v}$ all come from the left at the vertical segment of $\curve{v}$
or from the bottom at the horizontal segment of $\curve{v}$.  Therefore
the added lines do not cross any curves and so give a planar drawing of $G$
that is clearly respected by the representation.
Extending
the {\L}s slightly hence gives the desired 1-string representation.
\qed
\end{proof}

\begin{proof}[of Theorem~\ref{thm:IOnotEndOrder}]
Assume for contradiction that it did, and consider the induced representation
$\calR_W$ of $W_n$.  Let the naming of cycle $C$ be such
that $\curve{c}$ intersects $\curve{v_1},\dots,\curve{v_n}$ in this order.
Define as before $\curve{u}[v,w]$ (for any 2-path $v,u,w$) to be the stretch 
of $\curve{u}$ between the intersection with $\curve{v}$ and $\curve{w}$.
Now define $R$ to be the region bounded by $\curve{c}[v_1,v_n]$ (which is
almost the entire curve $\curve{c}$), as well as $\curve{v_n}[c,v_1]$ and
$\curve{v_1}[v_n,c]$ (which exist since $(v_1,v_n)$ is an edge). 
See also Fig.~\ref{fig:IOnotEndOrder}.

Consider $v_i$ for $i=3,4,5$, which is adjacent to neither $v_1$ nor $v_n$.
$\curve{v_i}$ intersects the boundary of $R$ (because it intersects
$\curve{c}[v_1,v_n]$ by assumption), but does not intersect it twice, else
it would intersect $\curve{c}$ twice or intersect $\curve{v_1}$ or 
$\curve{v_n}$.  Hence one end of $\curve{v_i}$ is inside $R$ while the
other one is outside, and so not both ends of $\curve{v_i}$ can be on
the contour for $i=3,4,5$. 

This shows that $W_n$ is not outer-1-string in the sense that for some
vertex not both ends of the curves are on the contour.  
Now consider $W_n^+$, and the vertices $w_3$ and $w_4$ that were added
at $v_4$  when creating $W_n^+$.  
Since $w_3$ and $w_4$ are
adjacent to none of $c,v_1,v_n$, and since the drawing is outer-string,
both $\curve{w_3}$ and $\curve{w_4}$ 
(and therefore their intersections with $\curve{v_4}$) must
be outside $R$.

So walking along $\curve{v_4}$ starting at the end inside $R$, we
encounter $\curve{c}$ and then one of $\{\curve{w_3},\curve{w_4}\}$.
We assume that we encounter $\curve{w_3}$ before $\curve{w_4}$;
the other case is symmetric (and results in $\curve{v_5}$
having no end on the contour).  Consider the region $R'$ enclosed by
$\curve{v_4}[c,w_4]$, $\curve{w_4}[v_4,v_5]$, $\curve{v_5}[w_4,c]$ and
$\curve{c}[v_5,v_4]$. Since $\curve{w_4}$ is outside $R$, so is $R'$.
Curve $\curve{v_3}$ intersects $\delta R'$, because
it intersects $\curve{v_4}$, and this intersection must be on $\curve{v_4}[c,w_3]$
to preserve the order of edges around $v_4$ (and since we know that
$\curve{c},\curve{w_3},\curve{w_4}$ intersect $\curve{v_4}$ in this order).
Curve $\curve{v_3}$ cannot intersect $\delta R'$ again, else it would intersect
$\curve{c}$ or $\curve{v_4}$ twice or would intersect $\curve{w_4}$ or $\curve{v_5}$,
which it shouldn't.  Therefore one end of $\curve{v_3}$ is inside $R'$, which is outside $R$.
The other end of $\curve{v_3}$ is inside $R$.  So neither end of $\curve{v_3}$ is on
the contour.  Contradiction.
\qed
\end{proof}

\begin{figure}[b]
\hspace*{\fill}
\includegraphics[width=0.3\linewidth]{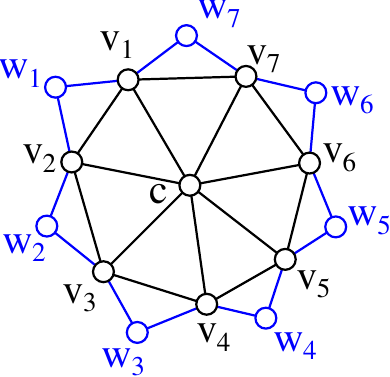}
\hspace*{\fill}
\includegraphics[width=0.3\linewidth]{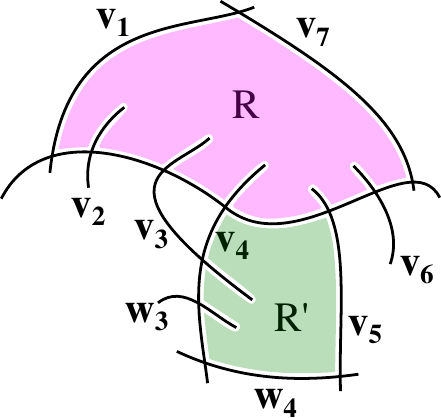}
\hspace*{\fill}
\caption{For the proof of Theorem~\ref{thm:IOnotEndOrder}.}
\label{fig:IOnotEndOrder}
\label{fig:IO}
\end{figure}

\end{document}